\newif\ifcomments%
\newcommand{\R}{\ensuremath{\mathbb{R}}}
\newcommand{\C}{\ensuremath{\mathbb{C}}}
\newcommand{\D}{{\ensuremath{\cal D}}}
\newcommand{\F}{{\ensuremath{\cal F}}}
\begin{document}

\title{Algebraic Biochemistry: a Framework for Analog Online Computation in Cells}
\author{Mathieu Hemery \and Fran\c{c}ois Fages}
\institute{Inria Saclay, Palaiseau, France}

\maketitle

\begin{abstract}
 The Turing completeness of continuous chemical reaction networks (CRNs) states that any
  computable real function can be computed by a continuous CRN on a finite set of
  molecular species, possibly restricted to elementary reactions, i.e.~with at most two
  reactants and mass action law kinetics. In this paper, we introduce a notion of
  online analog computation for the CRNs that stabilize the concentration of their output
  species to the result of some function of the concentration values of their input
  species, whatever changes are operated on the inputs during the computation. We prove
  that the set of real functions stabilized by a CRN with mass action law kinetics is
  precisely the set of real algebraic functions.
 \keywords{Chemical reaction networks, stabilization, analog computation, online computation, algebraic functions.}
\end{abstract}
 
\section{Introduction}

Chemical Reaction Networks (CRNs) are a standard formalism used in chemistry and
biology to describe complex molecular interaction systems.
In the perspective of systems biology, they are a central tool to analyze
the high-level functions of the cell in terms of their low-level molecular interactions.
In that perspective, the Systems Biology Markup Language (SBML)~\cite{Hucka03bi}
is a common format to exchange CRN models and build CRN model repositories, such as Biomodels.net~\cite{CLN13issb}
which contains thousands of CRN models of a large variety of cell biochemical processes.
In the perspective of synthetic biology, they constitute a target
programming language to implement in chemistry new functions either \emph{in vitro},
e.g.~using DNA polymers~\cite{QSW11dna}, or in living
cells using plasmids~\cite{DWGLERBW14nar} or in artificial vesicles using proteins~\cite{CAFRM18msb}.

The mathematical theory of CRNs was introduced in the late 70's, on the one hand by Feinberg in \cite{Feinberg77crt},
by focusing on perfect adaptation properties and multistability analyses~\cite{CF06siamjam},
and on the other hand, by {\'E}rdi and T{\'o}th
by characterizing the set of Polynomial Ordinary Differential Equation systems (PODEs)
that can be defined by CRNs with mass action law kinetics,
using dual-rail encoding for negative variables~\cite{ET89book}.

More recently, a computational theory of CRNs was investigated by formally relating their
Boolean, discrete, stochastic and differential
semantics in the framework of abstract interpretation~\cite{FS08tcs},
and by studying the computational power of CRNs under those different interpretations~\cite{CSWB09ab,CDS12nc,FLBP17cmsb}.

In particular, under the continuous semantics of CRNs interpreted by ODEs,
the Turing-completeness result established in~\cite{FLBP17cmsb} states that
any computable real function, i.e.~computable by a Turing machine with an arbitrary precision given in input,
can be computed by a continuous CRN on a finite set of molecular species,
using elementary reactions with at most two reactants and mass action law kinetics.
This result uses the following notion of analog computation of a non-negative real function computed by a CRN,
where the result is given by the concentration of one species, $y_1$,
and the error is controlled by the concentration of one second species, $y_2$:

\begin{definition}\cite{FLBP17cmsb}
A function $ f: \mathbb{R_+} \to \mathbb{R_+} $ is CRN-computable if there exist
a CRN over some molecular species $\{y_1,...,y_n\}$,
and a polynomial $ q\in\mathbb{R_+} ^ n[\mathbb{R_+}]$ defining their initial concentration values,
such that for all $x\in\mathbb{R_+}$ there exists some (necessarily unique) function $y : \mathbb{R}\to \mathbb{R}^n$ such that
$y(0)=q(x),\ y'(t) = p (y (t)) $ and for all $t>1$
$$|y_1(t)-f(x)| \le y_2(t),$$
 $y_2(t) \ge 0$, $y_2(t)$ is decreasing and 
 $ \lim_{t \to \infty} y_2(t) = 0$.
\end{definition}

From the theoretical point of view of computability, the control of the error
which is explicitly represented in the above definition by the auxiliary variable $y_2$,
is necessary to
decide when the result is ready for a requested precision, and to mathematically define the function computed by a CRN if any.

From a practical point of view however, precision is of course an irrelevant issue
since chemical reactions are stochastic in nature and the stability of the CRN and robustness with respect to the concentration species variations
is a more important criterion
than the precision of the result.
With this provision to omit error control, the Turing-completeness result of continuous CRNs was used in~\cite{FLBP17cmsb}
to design a compilation pipeline to implement any mathematical elementary function in abstract chemistry.
This compiler, implemented in our CRN modeling, analysis and synthesis software Biocham~\cite{CFS06bi}
as the one presented here\footnote{All the computational results presented in this paper are available in an executable Biocham notebook at \url{https://lifeware.inria.fr/wiki/Main/Software\#CMSB22}},
 generates a CRN over a finite set of abstract molecular species,
through several symbolic computation steps, mainly composed of polynomialization~\cite{HFS21cmsb},
quadratization~\cite{HFS20cmsb} and lazy dual-rail encoding of negative variables.
A similar approach is undertaken in the CRN$++$ system~\cite{VSK18dna}, also related to\cite{CTT20nc}.

Now, it is worth remarking that in the definition above, and in our implementation in Biocham,
the input is defined by the initial concentration of the input species
which may be consumed by the synthesized CRN to compute the result.
This marks a fundamental difference with many natural CRNs which perform a kind of online computation
by adapting the response to the evolution of the input.
This is the case for instance of the ubiquitous MAPK signaling network which computes an input/output function
well approximated by a Hill function of order 4.9~\cite{HF96pnas},
while our synthesized CRNs to compute the same function consume their input and do not correctly adapt to change
of the input value during computation~\cite{HFS21cmsb}.

In this paper, we are interested in a notion of online computation for continuous CRNs,
by opposition to our previous notion of static computation of the result of a function for any initially given input.
Our main theorem shows that the set of input/output functions stabilized online by a CRN with mass action law kinetics,
is precisely the set of real algebraic functions.

\begin{example}\label{ex:hill}
We can illustrate this result with a simple example.
Let us consider a cell that produces a
receptor, $I$, which is transformed in an active form, $A$, when bound to an external ligand
$L$, and which stays active even after unbinding:
\begin{gather}
\begin{aligned}
  L+I &\rightarrow L+A \\
  \emptyset &\leftrightarrow I \\
  A &\rightarrow \emptyset
\end{aligned}
\end{gather}
The differential semantics with mass action law of unitary rate constant is the PODE:
\begin{gather}
\begin{aligned}
\frac{dI}{dt} &= 1-I-LI \\
\frac{dA}{dt} &= LI-A \\
\frac{dL}{dt} &= 0
\end{aligned}
\end{gather}
At steady state, all the derivatives are null and by eliminating $I$, we immediately
obtain the polynomial equation: $L - LA - A = 0$. Thinking of this simple CRN
as a kind of signal processing with the ligand as input and the active receptor
as output, it is possible to find a polynomial relation between the input and the
output. In this case, this relation entirely defines the function computed by
the CRN: $$A(L) = \frac{L}{1+L}.$$
For a given concentration of ligand, this is
the only stable state of the system, independently of the initial concentrations
of $A$ or $I$. This is why we say that the CRN stabilizes the function.
\end{example}

Such functions, for which there exists a polynomial relation between the inputs and
output, are called algebraic functions in mathematics.
We show here that the set of real algebraic functions is
precisely the set of input/output functions stabilized by CRNs with mass action law kinetics.
Furthermore, our constructive proof provides a compilation method to generate a stabilizing CRN for any
real algebraic curve, i.e. any curve defined by the zeros of some polynomial.

\subsection{Related work}

Our CRN synthesis results can be compared to the ones of Buisman \& al.~who present
in~\cite{BEHL09al} a method to implement any algebraic expression by an abstract
CRN\footnote{The terminology of ``algebraic functions'' used in the title
 of~\cite{BEHL09al} refers in fact to its restriction to algebraic expressions.}.
They
rely on a direct expression of the function and a compilation process that mimics the
composition of the elementary algebraic operations. We improve their results in three
directions. First, our compilation pipeline is able to generate stabilizing CRNs for any
algebraic function, including those algebraic functions that cannot be defined by algebraic expressions, such
as the Bring radical (see Ex.~\ref{ex:Bring}). Second, our theoretical framework shows
that the general set of algebraic functions precisely characterizes the set of functions
that can be stably computed online by a CRN.
Third, the quadratization and lazy-negative optimization algorithms presented in this paper allow us to generate more concise CRNs.
On the example given in section 3.4 of \cite{BEHL09al} 
for the quadratic expression $$y = \frac{b - \sqrt{b^2-4ac}}{2a}$$ used to find the root of a
polynomial of second order, our compiler generates a CRN of $7$ species (including the $3$ inputs) and $11$
reactions, while their CRN following the syntax of the expression uses $10$ species and $14$ reactions.
Moreover, our dual-rail encoding allows us to give correct answers for negative values of $y$.

\section{Definitions and Main Theorem}

For this article, we denote single chemical species with lower case letters and set of
species 
with
upper case letters, e.g.~$X = \{ x_1, x_2, \ldots \}$.
By abuse of notation, we will use the same symbol for the variables of the ODEs,
the chemical species and their concentrations, the context being sufficient to
remove any ambiguity.

\subsection{Chemical Reaction Networks}

A chemical reaction with mass action law kinetics is composed of a multiset of reactants, a multiset of products and one rate constant.
Such a reaction can be written as follows:
\begin{equation}
	a+b \xrightarrow{k} 2a
\end{equation}
where $k$ is the rate constant, and the multisets are represented by linear expressions in which the (stoichiometric) coefficients
give the multiplicity of the species in the multisets, here 2 for the product $a$, the coefficients equal to 1 being omitted.
In this example, the velocity of the reaction is the product
$k a b$, i.e.~the rate constant $k$ times the concentration of the reactants, $a$ and $b$.

In this paper, we consider CRNs with mass action law kinetics only. 
It is well known that the other kinetics, such as Michaelis-Menten or Hill kinetics,
can be derived by quasi-steady state or quasi-equilibrium reductions of more complex CRNs
with mass action kinetics~\cite{Segel84book}.
Furthermore, the Turing-completeness of this setting~\cite{FLBP17cmsb} shows that there is no loss of generality with that restriction.

We also study the case where one or several species, called \emph{pinned (input) species}, are present in such a
way that their concentrations remain constant,
independently of the activity of the CRN under study.

\begin{definition}
 The differential semantics of a CRN with a distinguished set of pinned species $S^p$,
 is composed of the usual ODEs for the non pinned species
$s \notin S^p$, and null differential functions for the pinned species:
\begin{equation}
\forall s \in S^p,\quad \frac{d s}{dt} = 0.
\end{equation}
\end{definition}

This pinning process may be due to a scale separation between the different concentrations (one
of the species is so abundant that the CRN essentially do not modify its
concentration), to a scale separation of volume (e.g. a compartment within a cell and a freely diffusive small molecule)
or to an active mechanism ensuring perfect adaptation (e.g. the
input is produced and consumed by some external reactions faster than the CRN itself, thus locking its concentration).

\subsection{Stabilization}

We are interested in the case where one particular species of the CRN, called the output, is such that, whatever moves the inputs may do, once the
inputs are fixed, the concentration of the output species stabilizes on the result of some function
of the fixed inputs. Furthermore, we want this value to be robust to small
perturbations of both the auxiliary variables and the output. Of course, if the
inputs are modified, the output has to be modified. The output thus encodes
a particular kind of robust computation of a function which we shall call
stabilization.

\begin{definition}\label{def:stabilizes}
We say that a CRN over a set of $m+1+n$ species $\{X,y,Z\}$ with 
pinned inputs $X$ of cardinality $m$ and distinguished outputs $y$, stabilizes the
function $f : I \mapsto \R_+$, with $I \subset \R_+^m$, over the domain $\D \subset \R_+^{m+1+n}$
if:
\begin{enumerate}
  \item $\forall X^0 \in I$ the restriction of the domain $\D$ to the slice
$X=X^0$ is of plain dimension $n+1$, and
	\item $\forall (X^0,y^0,Z^0) \in \D$ the Polynomial Initial Value Problem (PIVP) given by the differential semantic with
pinned input species $X$ and the initial conditions $X^0,y^0,Z^0$ is such that:
$\lim_{t\rightarrow \infty} y(t) = f(X).$
\end{enumerate}

This definition is extended to functions of $\R^n$ in $\R$ by dual-rail
encoding~\cite{ET89book}: for a CRN over the species $\{X^+, X^-, y^+, y^-, Z\}$ we ask that
$\lim_{t\rightarrow \infty} (y^+ - y^-)(t) = f(X^+ - X^-),$ for all initial
conditions in the validity domain $\D$.

Let $\F_S$ be the set of functions stabilized by a CRN.
\end{definition}

Several remarks are in order.
A first remark concerns the fact that we ask for a domain $\D$ of plain
dimension $n+1$, i.e.~non-null measure in $\mathbb{R}^{n+1}$. That constraint
is imposed in order to benefit from a strong form of robustness: there exists an
open volume containing the desired fixed point such that it is the unique
attractor in this space. Hence in this setting, minor perturbations are always
corrected. This requirement of an isolated fixed point also impedes 
from hiding information in the initial conditions. The following example
illustrates the crucial importance of that condition on the dimension of the
domain $\D$

\begin{example}\label{ex:cosine}
The following PODE is constructed in such a way that $z_2$ goes exponentially to $x$ while $y$ and $z_1$ remain
equal to $\cos(z_2)$ and $\sin(z_2)$ respectively.
\begin{gather}
\begin{aligned} 
  \frac{dx}{dt} &= 0, \quad &x(t=0) = \text{input} \\
  \frac{dy}{dt} &= -z_1 (x-z_2) \quad &y(t=0) = 1 \\
  \frac{dz_1}{dt} &= y (x-z_2) \quad &z_1(t=0) = 0 \\
  \frac{dz_2}{dt} &= (x-z_2) \quad &z_2(t=0) = 0\\
\end{aligned}
\end{gather}
One might think that this
PODE stabilizes the cosine as we have $\lim y(t) = cos(x)$ for any value of $x$. 
But cosine is not an algebraic function, and indeed, the only requirement for
this PODE to be at steady state is: $x = z_2$, meaning that there exist fixed
points for any value of $z_1$ and $x$. So this PODE does not stabilize the cosine
function.
The reason is that the cosine computation is encoded in the initial state. It is only for the
domain where $y = \cos(z_2)$ and $z_1 = \sin(z_2)$ that the computation works, but this domain
is of null measure in $\R^3$ which breaks the first condition of Def.~\ref{def:stabilizes}.
\end{example}

A second remark is that
since the inputs are fixed in our semantics (they are by definition pinned
species),
the target of the output species which is the result $f(X)$ of some function $f$
is not a fluctuating goal: it is fixed by the initial conditions.
In practice, what we ask is that the dynamics of the ODE for the slice of the
domain $\D$ defined by imposing the inputs have a unique attractor satisfying $y
= f(X^0)$. But as we do not impose any constraint on the other variables ($Z$),
we cannot speak of a fixed point since the dynamics on the other variables may
not stabilize (e.g.~oscillations, divergence, etc.). We will nevertheless speak
of these object as pseudo-fixed point. If we start from a point on this pseudo
fixed point, we will have: $\forall t, y(t) = f(X^0)$.

A third remark is that our definition implies that apart from a transient
behaviour of characteristic time $\tau$, the whole system is constrained to live
in, or nearby, the subspace defined by $y = f(X)$. What is interesting is that
if the inputs are themselves varying with a characteristic time that is slower
than $\tau$, the output will follow those variations, hence preserving our
desired property up to an error coming from the delay as long as the system
stays in the domain $\D$. In a synthetic biology perspective, it is in
principle possible to use a time-rescaling to modify the value of $\tau$.
While a small $\tau$ allows for a faster adaptation, this usually comes at the expense of a
greater energetic cost as the proteins turn-over tends to
increase.

\subsection{Algebraic Curves and Algebraic Functions}

In mathematics, an algebraic curve (or surface) is the zero set of a polynomial
in two (or more) variables. It is a usual convention in mathematics to
speak indifferently of the polynomial and the curve it defines, seen as the same object.
For instance, $x^2+y^2-1$ is seen as the unit circle.

Any polynomial $P$ can be expressed as a product of irreducible polynomials,
i.e.~polynomials that cannot be further factorized, up to a 
constant $k$:
$$\displaystyle{P = k. \prod_{i = 1 \dots n} P_i^{a_i}}.$$
The $P_i$'s are called the components of $P$, and $a_i$ the multiplicity of
$P_i$.
We say that $P$ is in reduced form if all the components have multiplicity one, $\forall i\ a_i=1$.
This is justified by one important result of
algebraic geometry: in an algebraic closed field, such as the complex numbers $\C$, the set of
points of an algebraic curve given with their multiplicity, suffices to define the polynomial in reduced form.
This makes algebraic geometry an elegant and powerful theory. 

In a non-algebraically closed field such as $\R$, a polynomial may have no real
root. This difficulty is however irrelevant to us in this paper since we start
with an algebraic real function, thus assuming the existence of real roots. For
the purpose of this article, this fundamental result provides a canonical
correspondence between an algebraic real function and its polynomial of minimal
degree, i.e.~a polynomial in reduced form, up to a multiplicative factor.

\begin{definition}
A function $f:I \subset \R^m \mapsto \R$ is algebraic if there exists
a polynomial $P$ of $m+1$ variables such that:
\begin{equation}
\forall X \in I, P(X, f(X)) = 0.
\end{equation}
We denote $\F_A$ the set of real algebraic functions.
\end{definition}

We shall prove the following central theorem:
\begin{theorem}\label{thm:main}
 The set of functions stabilized by a CRN with mass action law kinetics
 is the set of algebraic real functions:
 $\F_S = \F_A.$
\end{theorem}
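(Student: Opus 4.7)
The plan is to prove both inclusions separately, one by elimination theory and the other by explicit construction.

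For $\F_S \subseteq \F_A$, I would exploit the fact that in a mass-action CRN the right-hand sides of the ODEs are polynomials in the species concentrations. If a CRN stabilizes $f$ on a domain $\D$ of plain dimension $n+1$, then the set of pseudo-fixed points attracting the dynamics lies in the algebraic variety cut out by setting every non-pinned derivative to zero, which gives a system of polynomial equations in the variables $(X, y, Z)$. Eliminating the auxiliary variables $Z$ (by taking the Zariski closure of the projection, equivalently by the Tarski--Seidenberg theorem or a Gr\"obner basis computation) yields an algebraic set in $(X, y)$ on which the graph of $f$ lies; hence some non-zero polynomial $P$ satisfies $P(X, f(X)) = 0$. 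For the dual-rail extension, the same elimination gives a polynomial relation in $(X^+, X^-, y^+, y^-)$; a linear change of variables $u = X^+ - X^-$, $v = y^+ - y^-$ followed by further elimination of the complementary coordinates produces a polynomial $Q(u,v)$ vanishing on the graph of $f$.

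For $\F_A \subseteq \F_S$, I would proceed constructively. Given an algebraic $f$ with defining polynomial $P(X, y)$ in reduced form, consider the polynomial ODE
\begin{equation*}
 \dot{X} = 0, \qquad \dot{y} = - P(X, y)\,\partial_y P(X, y),
\end{equation*}
which admits $V(X, y) = \tfrac{1}{2} P(X, y)^2$ as a Lyapunov function since $\dot V = -(P\,\partial_y P)^2 \leq 0$. At regular points of the curve, where $\partial_y P \neq 0$, the branch $y = f(X)$ is a non-degenerate local minimum of $V$ regarded as a function of $y$ alone with $X$ pinned, hence a locally asymptotically stable equilibrium. The domain $\D$ is chosen as an open neighbourhood of the graph of $f$ that excludes the neighbouring roots of $P(X, \cdot)$; by construction this is of plain dimension $m+1$. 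Realizing this PIVP as a CRN then amounts to applying the compilation pipeline already available in Biocham: quadratization to reduce the right-hand side to degree at most two by introducing auxiliary species, dual-rail encoding to split every variable into its positive and negative parts so that the non-negative cross-effect condition of \'Erd\i--T\'oth is satisfied, and finally the standard association of one mass-action reaction with each resulting monomial.

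The main obstacle is the second direction, and specifically the interaction between the auxiliary species introduced by quadratization and the dimension condition on~$\D$. If $z_i$ is added as a placeholder for a monomial, the consistency constraint $z_i = \text{monomial}$ cuts out a subset of codimension at least one, whereas $\D$ must remain of plain dimension $n+1$ in the full extended species space. The fix is to design the compiled ODE so that each auxiliary $z_i$ has its own stable first-order relaxation toward its defining monomial, making the constraint manifold globally attracting on an open set rather than an invariant of lower dimension; combined with local convergence on the original $(X, y)$-slice this preserves the branch $y = f(X)$ as the unique attractor on a full-dimensional open domain. A secondary subtlety is that $f$ need not be expressible by radicals (as for the Bring radical of Ex.~\ref{ex:Bring}); this is handled precisely because we start from the defining polynomial $P$ rather than from an algebraic expression, so the construction above goes through uniformly.
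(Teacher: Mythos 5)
Your converse direction ($\F_S \subseteq \F_A$) has a genuine gap. You reduce to the variety cut out by setting \emph{every} non-pinned derivative to zero and then eliminate $Z$. But Definition~\ref{def:stabilizes} only requires $\lim_{t\to\infty} y(t) = f(X)$; it imposes nothing on the auxiliary species, and the paper explicitly warns that the $Z$ variables may oscillate or diverge while $y$ converges (this is why it speaks of \emph{pseudo}-fixed points rather than fixed points). Consequently the fixed-point variety $\{\dot y = 0,\ \dot Z = 0\}$ may fail to contain a point above $(X, f(X))$ --- it can even be empty (e.g.\ $\dot z = 1$, $\dot y = -y$ stabilizes $y=0$ with no equilibrium at all) --- so its projection need not contain the graph of $f$ and your elimination produces nothing. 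The paper avoids this by a \emph{differential} elimination in the style of projectively polynomial functions \cite{CPSW05ejde}: one eliminates $Z$ using higher derivatives of $y$ to obtain a single relation $P(X, y, y^{(1)}, \ldots, y^{(n)}) = 0$ valid along all trajectories, and then evaluates on a trajectory sitting on the pseudo-fixed point, where all $y^{(k)}$ vanish, to get $P^\star(X, f(X)) = 0$. You also omit the argument that the resulting polynomial is not identically zero; the paper derives this from the plain-dimension requirement on $\D$ (if $P^\star \equiv 0$, every $(X,y)$ could be a pseudo-fixed point, contradicting that a full-dimensional slice is attracted to the single value $f(X^0)$).

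The forward direction is essentially sound and close to the paper's, though routed differently: the paper uses $\dot y = \pm P(X,y)$ with the sign read off from the reduced form of $P$ (a simple root forces a sign change of $P$ across the branch, so one sign makes it attracting on the whole interval between the two neighbouring roots $Y^-$ and $Y^+$), whereas your gradient flow $\dot y = -P\,\partial_y P$ avoids choosing a sign but introduces extra equilibria wherever $\partial_y P(X,\cdot) = 0$ with $P \neq 0$; by Rolle's theorem such a point lies strictly between $f(X)$ and each neighbouring root, and where it is a local minimum of $|P(X,\cdot)|$ it is a spurious \emph{stable} equilibrium, so your $\D$ must be shrunk past the neighbouring critical points, not merely the neighbouring roots. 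Your diagnosis of the quadratization/plain-dimension tension and the relaxation fix $\dot z_i = (\text{monomial}) - z_i$ matches what the paper actually implements in its compilation pipeline.
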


One technical difficulty comes from the fact that it is not immediate to determine the
function $f$ from the polynomial. Indeed for a given polynomial pinning the value of the
inputs results in one, several or no possible value for the output. Hence, a given
polynomial actually defines several functions on the domain of its input.
This is for instance the case of the unit circle curve defined by $x^2+y^2-1$.
If we see it as an equation to solve upon $y$, it admits
two solutions when $x \in ]-1,1[$, exactly one for $x=-1$ or $x=1$,
and no solution for other values of $x$. Hence, that curve defines two continuous
functions $y(x)$, each of them with support $]-1,1[$.

To overcome that difficulty, let us call branch point (or branch curve), the set of points
where the number of real roots of an algebraic function changes (${-1,1}$ in the previous
example). Now for a polynomial $P(X,y)$ and a given root $X,y$ that is not a branch point,
the implicit function theorem ensures the existence and uniqueness of the implicit function
up to the next branch point/curve.

\begin{example}\label{ex:unitcircle}
 The branch points of the unit circle polynomial $x^2+y^2-1$ are ${(-1,0),(1,0)}$.
 If we provide an additional point on the curve, e.g.~$(0,1)$,
 one can define the function that contains it and that goes from one branch point to the
other one, here:
\begin{align*}
  ]-1,1[ &\rightarrow \R \\
  f: x &\mapsto \sqrt{x^2-1}
\end{align*}
Fig.~\ref{fig:circle} in a latter section illustrates the flow diagram used in this example by our CRN compiler to approximate that function.

Similarly in the case of the sphere defined by the polynomial $x_1^2+x_2^2+y^2-1$,
the branch curve is the whole unit circle contained in the plane $y=0$. And
giving the point $0,0,-1$ is enough to define the whole surface corresponding to
the down part of the sphere inside the branch-curve circle.
\end{example}

\section{Proof}

\begin{lemma}\label{lmm:<-}
$\F_A \subset \F_S.$
\end{lemma}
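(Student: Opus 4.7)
The plan is to take the defining polynomial $P(X,y)$ of $f$ and turn it into a polynomial ODE whose trajectories drive $y$ onto the algebraic variety $\{P=0\}$, then realize that ODE as a mass-action CRN by the classical \'Erdi--T\'oth dual-rail encoding.

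First I would set up the gradient-descent PIVP
\begin{equation*}
\dot X_i = 0, \qquad \dot y = -P(X,y)\,\partial_y P(X,y),
\end{equation*}
i.e.\ minus the gradient of $\tfrac12 P(X,y)^2$ in $y$. The right-hand side vanishes on $\{P=0\}$, so $y^{\ast}=f(X^0)$ is an equilibrium on every slice $X=X^0$. At any non-branch equilibrium the linearization equals $-(\partial_y P(X^0,y^{\ast}))^2 < 0$, which gives exponential stability. Continuity of the vector field then yields a full-dimensional open neighborhood $\D$ of the graph of $f$ on which both conditions of Definition~\ref{def:stabilizes} hold: every slice is of plain dimension $n+1$, and every trajectory from $\D$ converges to $(X^0,f(X^0))$.

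Next I would compile this PIVP into a CRN. The right-hand side is already polynomial, so no polynomialization is needed; the only obstacle is that $-P\partial_y P$ contains signed monomials, whereas mass-action fluxes are non-negative. The standard remedy is dual-rail encoding: write $y=y^+-y^-$, and for each signed monomial $c\,\mu(X,y)$ in the right-hand side introduce a catalytic mass-action reaction whose flux is $|c|\mu$ and whose sole stoichiometric effect is $+1$ on $y^+$ or $y^-$ according to $\mathrm{sign}(c)$, together with a fast annihilation $y^+ + y^- \to \emptyset$. Subtracting the two rail equations recovers the original PIVP for $y=y^+-y^-$ exactly, and the annihilation keeps the pair near one axis, so the resulting CRN inherits the convergence property. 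Inputs are declared as pinned species (using $X^+,X^-$ in the signed case).

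The hard part will be the specification of $\D$. Gradient descent on $P^2$ converges to whichever real root of $P(X,\cdot)$ lies in the same basin as $y(0)$, and a given polynomial generically defines several continuous algebraic functions, one per connected component of the complement of its branch locus, as illustrated by Example~\ref{ex:unitcircle}. The CRN therefore stabilizes only the branch selected by $\D$, which is precisely what Definition~\ref{def:stabilizes} allows but requires $\D$ to be chosen as a tubular neighborhood of the graph of $f$ that stays strictly inside a single basin of attraction and avoids the branch locus, where $\partial_y P$ vanishes and the linear stability argument degenerates. A secondary technicality is to verify that the dual-rail lift introduces no spurious equilibria with $y^+,y^-\to\infty$; this should follow from the fast annihilation, although a careful argument in the spirit of the paper's lazy dual-rail variant will likely be needed.
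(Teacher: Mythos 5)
Your proposal is correct in substance but follows a genuinely different route from the paper. Where you use the gradient flow $\dot y = -P\,\partial_y P$ of $\tfrac12 P^2$, the paper uses the lower-degree right-hand side $\dot y = \pm P(X,y)$ and selects the sign so that the field points toward the graph of $f$; stability there hinges on $P$ being the \emph{reduced} (minimal-degree) polynomial of $f$, so that every branch has multiplicity one and $P$ genuinely changes sign across the curve. Your construction trades that algebraic hypothesis for an analytic one: the gradient flow is attracting at every real root regardless of multiplicity (near a root of multiplicity $m$ the field behaves like $-c^2 m (y-r)^{2m-1}$), and no sign selection is needed, which is arguably cleaner. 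The costs are (i) a roughly doubled degree, hence a larger CRN after quadratization and dual-rail encoding, which matters for the paper's conciseness claims; and (ii) spurious equilibria at points where $\partial_y P = 0$ but $P \neq 0$, absent from the paper's flow, which forces your domain $\D$ to be cut down to a basin of attraction --- structurally the same move as the paper's interval $]Y^-,Y^+[$ between adjacent roots, so no real loss. Both arguments finish identically with the \'Erdi--T\'oth dual-rail encoding and pinned inputs, and both leave the same technicalities (behaviour at the branch locus, possible divergence outside $\D$) at the same informal level as the paper itself. One small point worth making explicit in your write-up: exponential stability of the slice dynamics alone does not immediately give a \emph{product} neighborhood of plain dimension $n+1$ on every slice as required by condition~1 of Definition~\ref{def:stabilizes}; you should note, as the paper implicitly does, that $\D$ may be taken as a union over $X^0 \in I$ of slice-wise open intervals around $f(X^0)$, since the definition only constrains each slice separately.
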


\begin{proof}
Suppose that $f:I \mapsto \R$ is an algebraic real function and let $P_f$ denote the canonical polynomial
such that $\forall X \in I, P_f(X,f(X)) = 0$. Let us choose a vector
$X_0$ in the domain of $f$.

Then, the PODE
\begin{gather}
\begin{aligned}
  \frac{dy}{dt} &= \pm P(X,y), \label{eq:stab} \\
  \frac{dX}{dt} &= 0,
\end{aligned}
\end{gather}
is such that $Y = f(X)$ is a fixed point. By choosing the sign such that,
locally $\pm P(X,y)$ is negative above $Y = f(X)$ and positive below, we ensure
that this point is stable.

The fact that the polynomial has to change the sign across the fixed point is
dut to the fact that we choose the polynomial of minimal degree, hence it
has to be in reduced form and the multiplicity of every branch of the curve is
one: the sign cannot be the same on both sides of the curve.

	It is worth remarking that any ODE system made of elementary mathematical
	functions can be transformed in a polynomial ODE system~\cite{HFS21cmsb}, hence one can wonder why we restrict here
	to polynomial expressions. This comes from the condition that asks that the
	domain $\D$ be of plain dimension in Def.~\ref{def:stabilizes}.
    The polynomialization of an ODE system may indeed introduce constraints
	between the initial concentrations which is precisely what is forbidden by the
	requirement upon $\D$.

Now, let us note $Y^+ = \inf(Y\ |\ P(X,Y)=0, Y>f(X))$ and $Y^- = \sup(Y\ |\ P(X,Y)=0, Y<f(X))$,
with $\pm \infty$ values if the set is empty.
We know that for all $y$
in $]Y^-, Y^+[$, the only attractor is $f(X)$ and as a polynomial can only have
a finite number of zeros, those sets are non empty.

For all variables that are not bound to be
positive, the dual-rail encoding consists in splitting the variable into two positive variables corresponding to the positive and
negative parts. Then, all positive monomials can be dispatched to the positive part and all
negative ones to the negative part (with a positive sign), with the addition of an mutual annihilation reaction
between the variables as described in~\cite{FLBP17cmsb}.
It is worth noting that that dual-rail encoding is necessary for positive
functions whenever the auxiliary variables may take negative values.

\end{proof}

\begin{lemma}\label{lmm:->}
$\F_S \subset \F_A.$
\end{lemma}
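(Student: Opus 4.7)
The plan is to extract, from any stabilizing CRN, a nontrivial polynomial $R(X, y) \in \R[X, y]$ that vanishes along the graph of $f$. Writing the polynomial differential semantics as $\dot y = P_0(X, y, Z)$ and $\dot Z_i = P_i(X, y, Z)$ with each $P_i \in \R[X, y, Z]$, I would associate to the vector field its Lie derivative $L = P_0 \partial_y + \sum_i P_i \partial_{Z_i}$, noting that $X$ is held constant since the inputs are pinned. The iterated Lie derivatives $Q_k := L^k y$ are then themselves polynomials in $\R[X, y, Z]$, and the $k$-th time derivative of $y(t)$ along any trajectory coincides with $Q_k$ evaluated along that trajectory.

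The first key step is to show that, for each input $X^0 \in I$, the polynomials $Q_k$ have a simultaneous zero at a point of the form $(X^0, f(X^0), Z^*)$. Fixing an initial condition in $\D \cap \{X = X^0\}$, by stabilization $y(t) \to f(X^0)$, which, together with boundedness of trajectories coming from mass conservation of the CRN, lets me iterate Barbalat's lemma: $\dot y(t) \to 0$, then $\ddot y(t) \to 0$, and so on, so that $Q_k(X^0, y(t), Z(t)) \to 0$ for every $k \ge 1$. Compactness then furnishes an accumulation point $(f(X^0), Z^*)$ at which every $Q_k$ vanishes. Geometrically this is the statement that on the invariant $\omega$-limit set the coordinate $y$ is constantly equal to $f(X^0)$, so all its derivatives along the flow must be zero there.

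Next, I would invoke Hilbert's basis theorem to conclude that the ideal $\mathcal{I} := \langle Q_k : k \ge 1 \rangle \subset \R[X, y, Z]$ is finitely generated, and form the elimination ideal $\mathcal{I}_e := \mathcal{I} \cap \R[X, y]$. Any $R \in \mathcal{I}_e$ is a combination $R = \sum_i g_i Q_i$, so by the previous step $R(X^0, f(X^0)) = 0$ for every $X^0 \in I$. Hence producing any nonzero element of $\mathcal{I}_e$ suffices to prove that $f$ is algebraic. The general case of Def.~\ref{def:stabilizes}, in which the dual-rail encoding allows negative values, then reduces to the positive case by applying the argument to $y^+ - y^-$ as output and $X^+ - X^-$ as input.

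The hard part will be showing that $\mathcal{I}_e$ is nonzero, and this is precisely where the plain-dimensional requirement on $\D$ bites. If $\mathcal{I}_e = (0)$, then, working over $\C$, the projection of $V_\C(\mathcal{I})$ onto $\C^{m+1}$ is Zariski-dense, and thus for generic $(X^0, y^0)$ there exists $Z^0$ with $Q_k(X^0, y^0, Z^0) = 0$ for all $k \ge 1$. At such a real initial condition all time derivatives of $y$ vanish at $t = 0$, so real-analyticity of solutions of polynomial ODEs forces $y(t) \equiv y^0$ identically. The plain-dimensional hypothesis guarantees that the slice $\D \cap \{X = X^0\}$ is open in $\R^{n+1}$, so I expect to be able to select such a degenerate initial condition inside $\D$ with $y^0 \ne f(X^0)$, contradicting stabilization. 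The delicate point of passing from the generic-over-$\C$ density statement to a real initial condition actually meeting the open set $\D$ is the technical heart of the argument and is where I anticipate the bulk of the work.
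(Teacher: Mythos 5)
Your overall strategy --- eliminate the auxiliary variables $Z$ by taking higher-order time derivatives of the output, specialize at the (pseudo-)fixed point where all those derivatives vanish, and use the plain-dimension requirement on $\D$ to rule out degeneracy --- is the same as the paper's. The difference is that the paper does not construct the Lie-derivative ideal itself: it invokes the characterization of projectively polynomial functions of Carothers et al.~\cite{CPSW05ejde}, which directly supplies a single \emph{nontrivial} polynomial relation $P(X,y,y^{(1)},\dots,y^{(n)})=0$ holding along trajectories; setting $y=f(X)$ and $y^{(k)}=0$ then yields $P^\star(X,f(X))=0$ with $P^\star=P(X,y,0,\dots,0)$, and only the non-vanishing of $P^\star$ remains to be argued from the dimension condition on $\D$. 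Your route has to re-derive the substance of that citation, and two of your steps do not go through as written.

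First, the Barbalat step: you justify boundedness of trajectories by ``mass conservation of the CRN'', but mass-action CRNs need not be conservative, and the paper explicitly allows the auxiliary variables to oscillate or diverge while $y$ converges (this is precisely why it speaks of \emph{pseudo}-fixed points rather than fixed points). Without a bounded trajectory you get neither the uniform continuity of $\dot y$ needed for Barbalat's lemma nor the compactness that furnishes your accumulation point $(f(X^0),Z^*)$; a common zero of all the $Q_k$ with finite $Z$-coordinates is exactly what may fail to exist. The paper sidesteps this by evaluating along a trajectory on which $y(t)\equiv f(X^0)$ identically. Second, and more seriously, the nontriviality of the elimination ideal $\mathcal{I}_e$ --- which you correctly identify as the heart of the matter --- is left open, and the sketched contradiction does not close: Zariski density of the projection of $V_\C(\mathcal{I})$ gives, for generic $(X^0,y^0)$, only a \emph{complex} witness $Z^0$, and even if a real nonnegative one exists there is no reason it should lie in the open slice $\D\cap\{X=X^0\}$, which is all that the definition of stabilization constrains. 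Until that point is supplied, the argument does not establish $\F_S\subset\F_A$.
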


\begin{proof}
Let us suppose that $f$ is a function stabilized by a mass action CRN.
The idea is to use the characterization of functions that are projectively
polynomial, as defined in~\cite{CPSW05ejde}. By using the higher-order derivatives of
the stabilized variable, it is shown in~\cite{CPSW05ejde} that one can eliminate all the auxiliary
variables and obtain a single equation of the form:
$$P(X, y, y^{(1)}, \ldots, y^{(n)}) = 0.$$

Using the fact that for all $X$, $y=f(X)$ is a pseudo fixed point by definition, if we use it
	as initial condition we immediately get:
\begin{align*}
  X &= X, \\
  y &= f(X), \\
	y^{(k)} &= 0 \quad \forall k \in [1,n].
\end{align*}

Injecting this in the characterization of the function $y$, we obtain:
\begin{equation}
  \forall X, P^\star(X, f(X)) = 0.
\end{equation}

There are now two cases. Either $P^\star$ is not trivial and effectively
defines the surface of fixed points: this gives a polynomial for $f$, hence $f$
is algebraic. Either $P^\star$ is the uniformly null polynomial. But in this
case, every points in the $X,y$ plane may be a fixed point and the domain $\D$
of the definition of stabilization is reduced to a single point, yet we asked it
to be of non-null measure. Therefore, $P^\star$ is not trivial and $f$ is algebraic.
\end{proof}

\section{Compilation Pipeline for Generating Stabilizing CRNs} \label{seq:compilation}

The proof of lemma~\ref{lmm:<-} is constructive and provides a method 
to transform any algebraic function defined by a polynomial and one point,
in an abstract CRN that stabilizes it.
This is implemented with a
command

\verb$stabilize_expression(Expression, Output, Point)$\\
with three arguments:
\begin{description}
\item[{\tt Expression}:] For a more user friendly interface, we accept in input more general mathematical
expressions than polynomials; the non polynomial parts are detected
and transformed by introducing new variable/species to compute their values;
\item[{\tt Output}:] a name of the Output species different from the input;
\item[{\tt Point}:] a point on the algebraic curve that is used to
determine the branch of the curve to stabilize if several exist.
\end{description}

Similarly to our previous pipeline for compiling any elementary function in an
abstract CRN that computes it~\cite{HFS21cmsb,HFS20cmsb,FLBP17cmsb}, 
our compilation pipeline for generating stabilizing CRNs follows the same sequence of symbolic transformations: 
\begin{enumerate}
\item polynomialization
\item stabilization
\item quadratization
\item dual-rail encoding
\item CRN generation
\end{enumerate}
yet with some important differences.

\subsection{Polynomialization}

This optional step has been added just to obtain a more
user friendly interface, since polynomials may sometimes be cumbersome to manipulate.
The first argument thus admits algebraic expressions instead of
being limited to polynomials.

The rewriting simply consists in detecting all the non-polynomial terms of the form
$\sqrt[a]{b}$ or $\frac{a}{b}$ in the initial
expression and replace them by new variables, hence obtaining a polynomial.

Then to compute the variables that just have been introduced, we perform
the following basic operations on each of them to recover polynomiality:
\begin{align*}
n = \sqrt[a]{b} &\rightarrow n^a-b \\
n = \frac{a}{b} &\rightarrow nb-a
\end{align*}
and recursively call \verb|stabilize_expression| on these new expressions
with the introduced variable (here $n$) as desired output.

\subsection{Stabilization}
To select the branch of the curve to stabilize,
it is sufficient to choose the sign in front of the polynomial in equation~\ref{eq:stab}.
such that at the designated point, the second derivative
of $y$ is positive. For this, we use a formal derivation to compute the sign of
the polynomial, and reverse it if necessary.

\subsection{Quadratization}

The quadratization of the PODE is an optional transformation which aims at generating
elementary reactions, i.e.~reactions having at most two reactants each,
that are fully decomposed and more amenable to concrete implementations with real biochemical molecular species.
It is worth noting that the quadratization problem to solve
here is a bit different from the one of our original pipeline studied
in~\cite{HFS20cmsb} since we want to preserve a different property of the PODE.
It is necessary here that the introduced variables stabilize on their target value
independently of their initial concentrations.
While it was possible in our previous framework to initialize the different
species with a precise value given by a polynomial of the input, this is no more the case here as
the domain $\D$ has to be of plain dimension. 

The variables introduced by quadratization correspond to monomials of order higher than $2$ that can
thus be separated as the product of two variables corresponding to monomials of
lower order: $A$ and $B$. Those variables can be either present in the original
polynomial or introduced variables.
The following set of reactions:
$$A+B \rightarrow A+B+M$$
$$M \rightarrow \emptyset,$$
gives the associated ODE:
\begin{equation}\frac{dM}{dt} = AB-M, \label{eq:monomial_computation}\end{equation}
for which the only stable point satisfies: $M = AB$.

Furthermore as before, we are interested in computing a quadratic PODE of minimal dimension.
In~\cite{HFS20cmsb}, we gave an algorithm in which the introduced variables were always equal to the monomial they compute,
whereas in our online stabilization setting, this is true only when $t \rightarrow \infty$.
For instance, if we replaced $AB$ by $M$ in equation \ref{eq:monomial_computation}, the system would no longer adapt to changes of the input.
To circumvent this difficulty, it is 
possible to modify the PIVP and use it as input of our previous algorithm to
take this constraint and still obtain the minimal set of variables.
In our previous computation
setting, the derivatives of the different variables where simply the
derivatives of the associated monomials computed in the flow generated by the
initial ODE.
In Alg.~\ref{algo:hack}, we construct a pseudo-ODE containing twice as many
variables, the derivatives of which being built to ensure
that the solution is correct. The idea is that the actual variables are of the
form $Mb$ and the $Mb^2$ variables exist only to construct the solution. To compute
quadratic monomials with a
$b^2$ term present in the derivatives of the $Mb$ variables (the
``true'' variables),
one can either add two $Mb$ variables to the solution set or add
a single $Mb^2$ variable. As can be seen on the lines 5 and 9 of Alg.~\ref{algo:hack}, $Mb^2$ variables
require that the corresponding $Mb$ is in the solution set.

\begin{algorithm}
\caption{Quadratization algorithm for a PODE stabilizing a function.
The $minimal\_quadratic\_set(PODE, y)$ returns the
minimal set of variables containing $y$ sufficient to express all its derivatives 
in quadratic form~\cite{HFS20cmsb}.\label{algo:hack}}
\textbf{Input}: A PODE of the form $\frac{dx_i}{dt} = P_i(X)$, with $i \in [1,n]$ to compute $x_n$\\
\textbf{Output}: A set $S$ of monomials to quadratize the input.
\begin{algorithmic}[5]
\State $ODE_\text{aux} \gets \emptyset$
\State find an unused variable name: $b$
\ForAll{$i \in [1,n]$}
  \State add $\frac{dx_ib}{dt} = P_i(X) \times b^2$ to $ODE_\text{aux}$
  \State add $\frac{dx_ib^2}{dt} = x_ib$ to $ODE_\text{aux}$
\EndFor
\State $AllMonomials \gets$ the set of monomials that are less or equal to a
monomial present in one of the $P_i$ and not in $X$.
\ForAll{$M \in AllMonomials$}
  \State add $\frac{dMb}{dt} = Mb^2$ to $ODE_\text{aux}$
  \State add $\frac{dMb^2}{dt} = Mb$ to $ODE_\text{aux}$
\EndFor
\State $S_\text{aux} \gets minimal\_quadratic\_set(ODE_\text{aux}, x_n b)$
\State $S \gets \emptyset$
\ForAll{$Mb \in S_\text{aux}$}
  \State add $M$ to $S$
\EndFor
\State \textbf{return} $S$
\end{algorithmic}
\end{algorithm}

This variant of the quadratization problem studied in~\cite{HFS20cmsb} has the same
theoretical complexity, as shown by the following proposition:

\begin{proposition}
The quadratization problem of a PODE for stabilizing a function and minimizing the number of variables is NP-hard.
\end{proposition}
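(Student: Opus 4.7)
The plan is to reduce the standard quadratization problem for PODEs, proven NP-hard in~\cite{HFS20cmsb}, to the stabilization variant of Algorithm~\ref{algo:hack}. Given an instance $(P,k)$ of the decision version of standard quadratization (does $P$ admit a quadratic set of cardinality at most $k$?) I would produce in polynomial time an instance of stabilization quadratization whose optimum is a simple computable function of the optimum for $P$. A natural candidate is to take the very same PODE $P$ with a designated output species $x_n$, and argue that the minimum size of a stabilization quadratization of $P$ coincides (up to a known additive constant accounting for the output variable itself) with the minimum size of a standard quadratic set for $P$.

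The correctness argument splits into two directions. The easy direction is that any standard quadratic set $S$ of size $k$ yields a stabilization quadratization of size $k$: for each monomial $M=AB\in S$, instantiating the reactions $A+B\rightarrow A+B+M$ and $M\rightarrow\emptyset$ produces the ODE $\dot M = AB-M$ whose unique attractor is $M=AB$, and substituting this identity asymptotically recovers the standard quadratization. The harder direction is the converse: a stabilization quadratization of size $k$ must force the existence of a standard quadratic set of size at most $k$. For this I would rely on the fact that at every pseudo-fixed point the introduced variables are forced by their defining ODEs to satisfy the algebraic identities $M=AB$ exactly, so that the combinatorial structure of admissible monomial sets---those closed under quadratic expression of every derivative---is identical in both problems. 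Once this equivalence of admissible sets is established, the minima coincide and NP-hardness transfers from~\cite{HFS20cmsb}.

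The main obstacle I expect is ruling out exactly this converse direction: the stabilization constraint is strictly weaker than the exact-equality constraint of standard quadratization, since introduced variables may drift transiently before converging to their target monomials. A priori, this flexibility could be exploited to obtain a smaller quadratic set by using introduced variables that are not quite products of pairs of existing variables but only converge to such products. The structural argument must show that the requirement of plain-dimensional domain $\D$ in Def.~\ref{def:stabilizes}---which already played a crucial role in Lemma~\ref{lmm:->}---together with the requirement that the quadratization be realized by mass-action reactions with at most two reactants, forces every introduced variable to be of the form $M=AB$ at the pseudo-fixed point for a pair of variables already in the set. With this rigidity in hand, the bijective correspondence between minimal solutions of the two problems follows, and the reduction preserves optimality in polynomial time.
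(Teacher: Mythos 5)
Your reduction goes in a workable direction (from a known NP-hard problem to the new one), but the source problem you chose creates a gap that the paper deliberately avoids. You reduce from the \emph{standard} quadratization problem of~\cite{HFS20cmsb} by keeping the very same PODE and claiming that the admissible sets of monomials are ``identical in both problems.'' They are not: in the standard setting an introduced variable $M$ tracking a monomial $m(X)$ must have its chain-rule derivative $\dot M=\nabla m\cdot \dot X$ expressible quadratically over the chosen set, whereas in the stabilization setting $M$ obeys the relaxation ODE $\dot M=AB-M$, whose quadratic expressibility only requires a factorization of the target monomial into two set members. These are different closure conditions, so the two minima need not coincide, and your ``harder direction'' cannot be repaired by the rigidity argument you sketch: even granting that every introduced variable equals $AB$ at the pseudo-fixed point, the combinatorics of which sets are admissible still differ. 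The paper itself implicitly concedes this point: Algorithm~\ref{algo:hack}, which encodes the stabilization problem as an instance of the standard one via the auxiliary $Mb$ and $Mb^2$ variables, is explicitly stated \emph{not} to preserve optimality. Even your ``easy direction'' is not automatic, since a standard quadratic set may contain a monomial that does not factor as a product of two members of the set.

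The paper sidesteps all of this by reducing vertex cover directly: for a graph $G=(V,E)$ it builds the PODE with inputs $V\cup\{a\}$ and output satisfying $\frac{dy}{dt}=\sum_{v_iv_j\in E}v_iv_ja-y$, and argues that an optimal stabilizing quadratization must introduce, for each edge, a variable corresponding to $v_ia$ or to $v_iv_j$, from which an optimal vertex cover can be read off. If you want to keep your strategy of transferring hardness from the standard quadratization problem, you would have to exhibit an explicit, optimality-preserving transformation between the two quadratization problems; that is precisely the step that is missing, and the paper's own encoding shows it is not straightforward.
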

\begin{proof}
The proof proceeds by reduction of the vertex covering of a graph as in~\cite{HFS20cmsb}.
Let us consider the graph $G=(V,E)$ with vertex set $v_i, i \in [1,n]$ and edge set $E \in V\times V$.
And let us study the quadratization of the PODE with input variables $V \cup
\{a\}$ and output variable $y$ such that the $y$ computes
$\sum_{v_i v_j \in E} v_i v_j a$.
The derivative is:
\begin{equation}
\frac{dy}{dt} = \sum_{v_i v_j \in E} v_i v_j a - y.
\end{equation}
An optimal quadratization contains variable corresponding either to $v_i a$ or
$v_i v_j$ indicating that an optimal covering of the graph $G$ contains either
the node $v_i$ either indifferently $v_i$ or $v_j$. Hence en optimal
quadratization gives us an optimal covering which concludes the proof.
\end{proof}

Our previous MAXSAT algorithm \cite{HFS20cmsb} and heuristics \cite{HFS21cmsb} can again be used here with the slight modification
mentioned above concerning the introduction of new variables. 

Alg.~\ref{algo:hack}, when invoked with an optimal search for
$minimal\_quadratic\_set$, is nevertheless not guaranteed to generate optimal solutions,
because of the ``pseudo'' variables noted $Mb^2$.
Despite those theoretical limitations, the CRNs generated by Alg.~\ref{algo:hack} are particularly concise,
as shown in the example section below and already mentioned above for the compilation of algebraic expressions compared to \cite{BEHL09al},

\subsection{Lazy dual-rail encoding}
As in our original compilation pipeline~\cite{FLBP17cmsb},
it is necessary to modify our PODE in order to impose that no variable may
become negative. This is possible through a lazy version of dual-rail encoding.
First by detecting the variable that are or may become negative
and then by splitting them between a positive and negative part, thus
implementing a dual-rail encoding of the variable: $y = y^+ - y^-$.
Positive terms of the
original derivative are associated to the derivative of $y^+$ and negative terms
to the one of $y^-$ and a fast mutual degradation term is finally associated to both
derivative in order to impose that one of them stays close to zero~\cite{FLBP17cmsb}.

\subsection{CRN generation}
The same back-end compiler as in our original pipeline
is used,
i.e.~introducing one reaction for each monomial.
It is worth remarking that this may have for effect to aggregate in one reaction
several occurrences of a same monomial in the ODE system~\cite{FGS15tcs}.

\section{Examples}

\begin{example}\label{ex:circle}
As a first example, we can study the unit circle: $x^2+y^2-1$.
Our pipeline gives us for the upper part of the circle, the following CRN.
\begin{gather}
\begin{aligned}
  \emptyset &\rightarrow y^+ &\quad
2 y^- &\rightarrow 3 y^- \\
  2 x &\rightarrow y^- + 2 x &\quad
2 y^+ &\rightarrow y^- + 2 y^+ \\
y^+ + y^- &\xrightarrow{\text{fast}} \emptyset
\end{aligned}
\end{gather}
the flow of the PODE associated to this model can be seen in figure~\ref{fig:circle}\textbf{A} and the
steady state is depicted in figure~\ref{fig:circle}\textbf{B} as a function of
$x$ in the positive quadrant.
\end{example}

\begin{figure}
\centering
\textbf{A.}\includegraphics[width=0.4\textwidth]{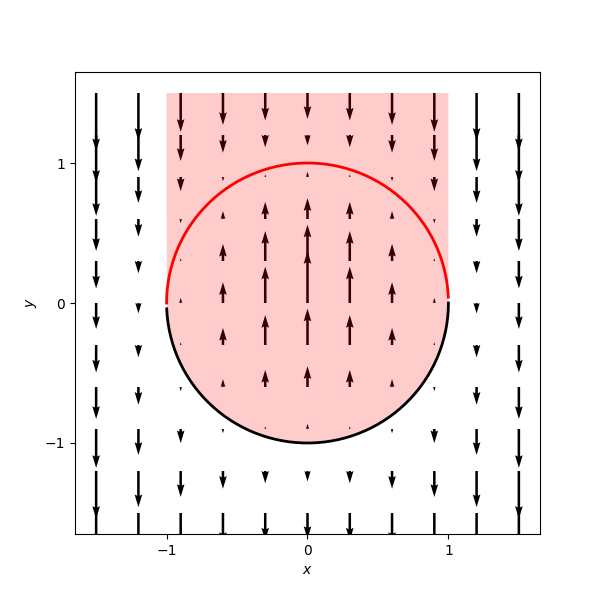}
\textbf{B.}\includegraphics[width=0.5\textwidth]{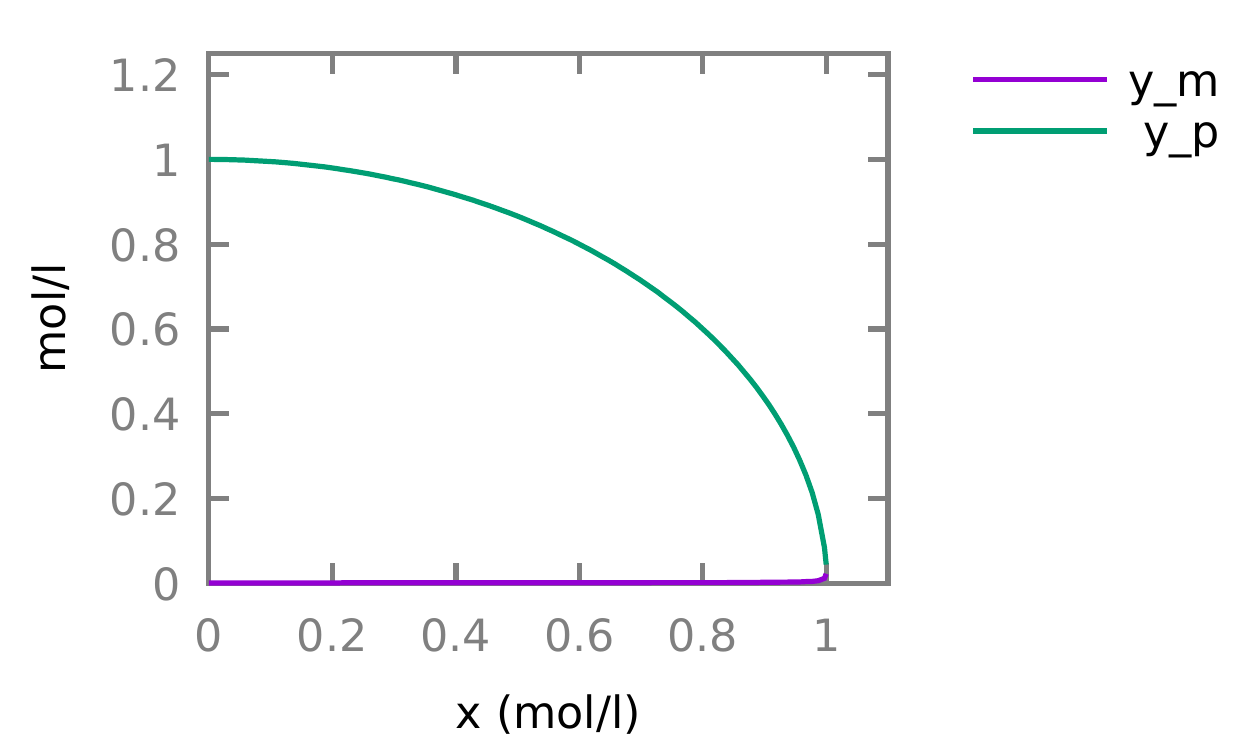} \quad
\caption{
\label{fig:circle}
\textbf{A.} Flow diagram in the $x,y$ plane before the dual rail encoding for
the stabilization of the unit circle. The arrows indicate the direction and strength
of the flow. The upper part of the curve (in red) indicates the stable branch
of the system and we colored in light red the domain $\D$ in which the system
will reach the desired steady state. Outside of $\D$, the system is driven to the
divergent state $\lim y = -\infty$.
\textbf{B.} Dose response diagram of the generated CRN where the input
concentration ($x$) is gradually increased from $0$ to $1$ while recording the steady
state value of the output species $y^+, y^-$.
}
\end{figure}

\begin{example}\label{ex:serpentine}
Even rather involved algebraic curves need surprisingly few species and
reactions.
This is the case of the serpentine curve, or anguinea,
defined by the polynomial
$(y-2) \left((x-10)^2+1\right) = 4 (x-10)$ for which we choose the point $x=10, y=2$
to enforce stability.
	The compilation process takes less than 100ms on a typical laptop\footnote{An Ubuntu 20.04, with
	an Intel Core i6, $2.4$GHz x $4$ cores and $15.5$GB of memory.}.
The generated CRN reproduces the anguinea curve on the $y$ variable, as shown in Fig.~~\ref{fig:anguinea}.
It is composed of the following $4$ species and
$12$ reactions:

\begin{gather}
\begin{aligned}
\label{model:anguinea}
  y_m+y_p &\xrightarrow{\text{fast}} \emptyset, &\quad
 2 x &\rightarrow z+2 x,\\
  z &\rightarrow \emptyset, &\quad
 \emptyset &\xrightarrow{162} y_p,\\
  y_m &\xrightarrow{101} \emptyset, &\quad
 x+y_p &\xrightarrow{20} x+2 y_p,\\
  z+y_m &\rightarrow z, &\quad
 z &\xrightarrow{2} z+y_p,\\
  x &\xrightarrow{36} x+y_m, &\quad
 y_p &\xrightarrow{101} \emptyset,\\
  x+y_m &\xrightarrow{20} x+2 y_m, &\quad
 z+y_p &\rightarrow z.
\end{aligned}
\end{gather}
\end{example}

\begin{figure}
\centering
\includegraphics[width=0.7\textwidth]{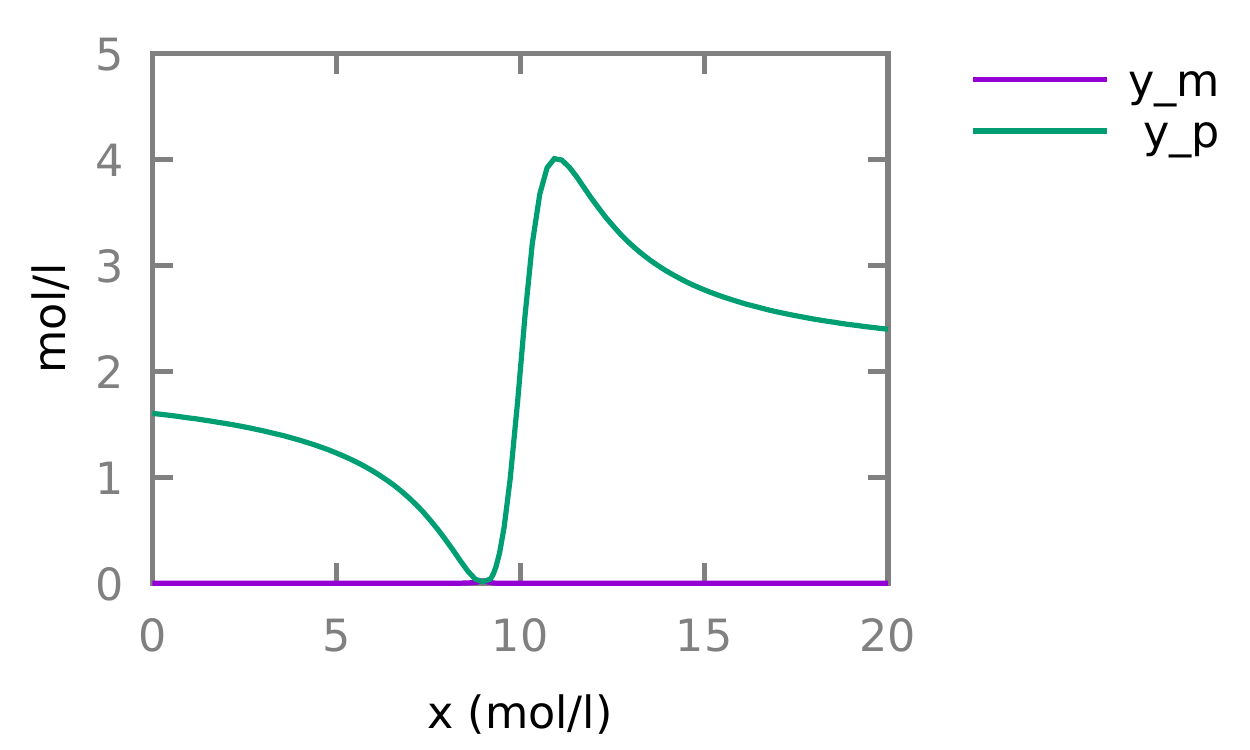}
\caption{\label{fig:anguinea}
 Dose-response diagram of the CRN generated by compilation of the serpentine algebraic curve,
 $(y-2) \left((x-10)^2+1\right) = 4 (x-10)$,
 with $x$ as input and $y$ as output.
}
\end{figure}

\begin{example}\label{ex:Bring}
In the field of real analysis, the Bring radical of a real number $x$ is defined
as the unique real root of the polynomial: $y^5+y+x$. The Bring radical is an
algebraic function of $x$ that cannot be expressed by any algebraic expression.

The stabilizing CRN generated by our compilation pipeline is composed of
  $7$ species (${y_m, y_p, y2_m, y2_p, y3_m, y3_p, x}$) and $20$ reactions presented in
  model~\ref{model:Bring}.
A dose-response diagram of that CRN
is shown in Fig.~\ref{fig:bring}.

\begin{gather}
\begin{aligned}
\label{model:Bring}
  y_m+y_p &\xrightarrow{\text{fast}} \emptyset, &\quad
  y2_m+y2_p &\xrightarrow{\text{fast}} \emptyset, \\
  y3_m+y3_p &\xrightarrow{\text{fast}} \emptyset, &\quad
  y_p &\xrightarrow{} \emptyset, \\
  y2_m+y3_p &\xrightarrow{} y2_m+y3_p+y_p, &\quad
  y2_p+y3_m &\xrightarrow{} y2_p+y3_m+y_p, \\
  x &\xrightarrow{} x+y_m, &\quad
  y_m &\xrightarrow{} \emptyset, \\
  y2_p+y3_p &\xrightarrow{} y2_p+y3_p+y_m, &\quad
  y2_m+y3_m &\xrightarrow{} y2_m+y3_m+y_m, \\
  2 \cdot y_p &\xrightarrow{} y2_p+2 \cdot y_p, &\quad
  2 \cdot y_m &\xrightarrow{} y2_p+2 \cdot y_m, \\
  y2_p &\xrightarrow{} \emptyset, &\quad
  y2_m &\xrightarrow{} \emptyset, \\
  y2_p+y_p &\xrightarrow{} y2_p+y3_p+y_p, &\quad
  y2_m+y_m &\xrightarrow{} y2_m+y3_p+y_m, \\
  y3_p &\xrightarrow{} \emptyset, &\quad
  y2_p+y_m &\xrightarrow{} y2_p+y3_m+y_m, \\
  y2_m+y_p &\xrightarrow{} y2_m+y3_m+y_p, &\quad
  y3_m &\xrightarrow{} \emptyset.
\end{aligned}
\end{gather}
\end{example}
   
\begin{figure}
\centering
\includegraphics[width=0.7\textwidth]{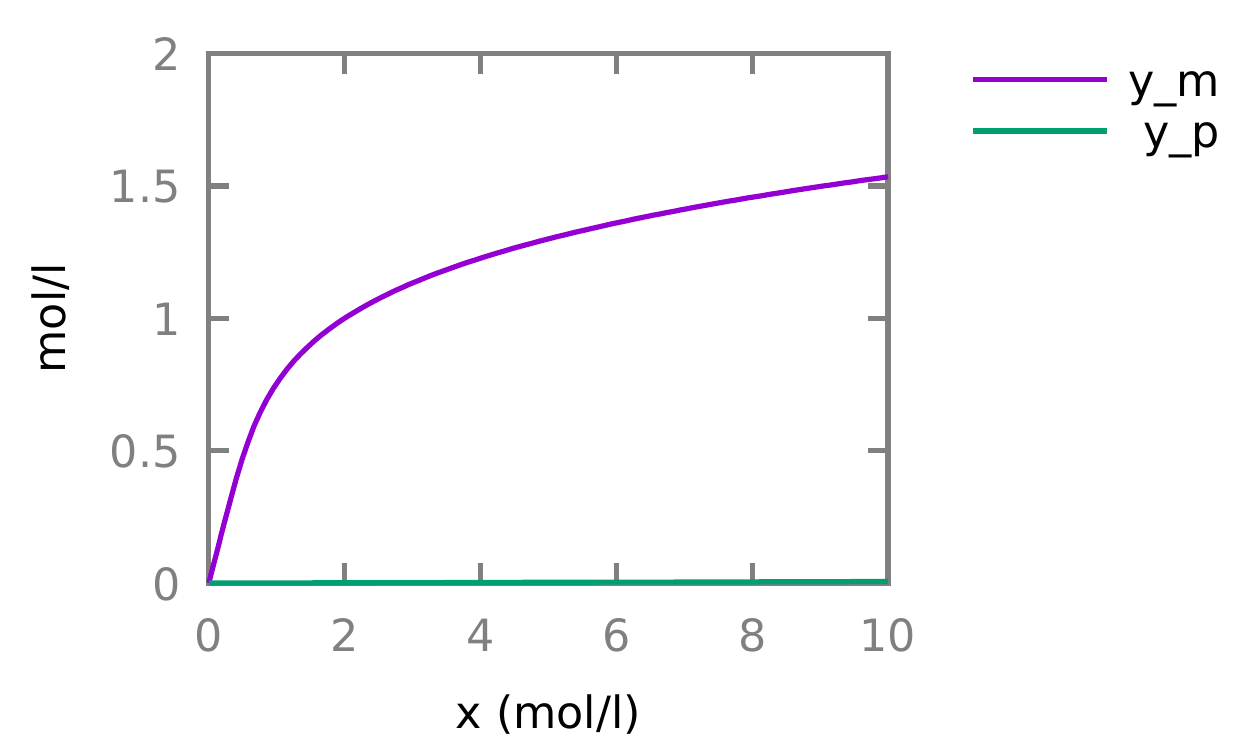}\\
\caption{
\label{fig:bring}
The bring radical is the real root of the polynomial equation $y^5+y+x=0$. As
this quantity is negative, the output is read on the negative part of the output
$y_m$. It is the
simplest equation for which there is no expression for $y$ as a function of $x$.
}
\end{figure}

\begin{example}\label{ex:hill5}
To generate the CRN that stabilize the Hill function of order 5, we can use the
expression $y-\frac{x^5}{1+x^5}$ along with the point $x=1, y=\frac{1}{2}$.
Our compilation pipeline generates the following model with $6$ species and $10$
	reactions:
\begin{tabular}{lllll}\label{model:hill5}
 $2 x \rightarrow z_1+2 x$, & $2 z_1 \rightarrow z_2+2 z_1$, & $x+z_2 \rightarrow x+z_2+z_3$, & $\emptyset \rightarrow z_4$, & $z_4+z_3 \rightarrow z_3+y$, \\
 $z_1 \rightarrow \emptyset$, & $z_2 \rightarrow \emptyset$, & $z_3 \rightarrow
	\emptyset$, & $z_4 \rightarrow \emptyset$, & $y \rightarrow \emptyset$,
\end{tabular}\\
all kinetics being mass action law with unit rate.
The $z's$ are auxiliary variables corresponding to the following expressions:
$$z_1 = x^2, \quad
z_2 = x^4, \quad
z_3 = x^5, \quad
z_4 = \frac{1}{1+x^5}.
$$

The production and degradation of $z_4$ may be surprising, but looking at all
the reactions implying both $z_4$ and $y$, we can see that their sum follow the
equation $\frac{d(z_4+y)}{dt} = 1-(z_4+y)$ hence ensuring that the sum of the
two is fixed independently of their initial concentrations.
It is worth remarking that another way of
reaching the same result would be to directly build-in conservation laws into
our CRN, hence using both steady state and invariant laws to define our steady
state which however would make us sensitive to the initial concentrations.
\end{example}

\section{Conclusion and Perspectives}

We have introduced a notion of on-line analog computation for CRNs in which the
concentration of one output species stabilizes to the result of some function of
the concentrations of the input species, whatever perturbations are applied to
the species concentrations during computation before the inputs stabilize. We
have shown that the real functions that can be stably computed by a CRN in that
way is precisely the set of real algebraic functions, defined by a
polynomial and one point. Furthermore, we have derived from the constructive
proof of this result a compilation pipeline to transform any algebraic function
in a stabilizing CRN which computes it online.

These results open a whole research avenue for both the understanding of the structure of
natural CRNs that allow cells to adapt to their environment, and for the design of
artificial CRNs to implement high-level functions in chemistry. In the latter perspective
of synthetic biology, our compilation pipeline makes it possible to automatically generate
an abstract CRN which remains to be implemented with real enzymes, as
in~\cite{CAFRM18msb}. Taking into account a catalog of concrete enzymatic reactions
earlier on in our compilation pipeline, in the polynomialization, quadratization and
dual-rail encoding steps, is a particularly interesting, yet hard, challenge in order to
guide search towards both concrete and economical solutions.

Our main theorem describes only CRNs at steady state only, while important aspects of signal
processing are linked to the temporal evolution of the signals.
Since our computation framework relies on ratios between production and degradation, a multiplication of
both terms by some factor might be the matter of future work to control the characteristic time $\tau$ of equilibration, with high
value of $\tau$ filtering out the high frequency noise of the inputs, and small $\tau$ values
resulting in a more accurate output, yet at the expense of a higher protein turnover.

\subsubsection*{Acknowledgments.} We are grateful to Amaury Pouly and Sylvain Soliman for interesting discussions on this work,
and to ANR-20-CE48-0002 and Inria AEx GRAM grants for partial support.

\bibliographystyle{plain}
\bibliography{contraintes}

\end{document}